\theoremstyle{plain}
\newtheorem{theorem}{Theorem}
\theoremstyle{definition}
\newtheorem{example}[theorem]{Example}
\theoremstyle{remark}
\newtheorem{remark}[theorem]{Remark}
\DeclareMathOperator{\diag}{diag}
\begin{document}

\author{Hamoon Mousavi and Jeffrey Shallit\\
School of Computer Science \\
University of Waterloo\\
Waterloo, ON  N2L 3G1 \\
Canada \\
{\tt hamoon.mousavihaji@uwaterloo.ca} \\
{\tt shallit@cs.uwaterloo.ca}\\
}

\title{Filtrations of Formal Languages by Arithmetic Progressions}

\maketitle

\begin{abstract}
A filtration of a formal language $L$ by a sequence
$s$ maps $L$ to the set of words
formed by taking the letters of words of $L$ indexed only by $s$.
We consider the languages resulting
from filtering by all arithmetic progressions.  If $L$ is regular,
it is easy to see 
that only finitely many distinct languages result.  By
contrast, there exist CFL's that give infinitely
many distinct languages as a result.  We use our technique to
show that the operation $\diag$, which extracts
the diagonal of words of square length arranged in a square array,
preserves regularity but does not preserve context-freeness.
\end{abstract}

\section{Introduction}

Let $s = (s(i))_{i \geq 0}$ be an infinite strictly increasing sequence
of non-negative integers.  Berstel et al.~\cite{Berstel&Boasson&Carton&Petazzoni&Pin:2006} introduced the notion of filtering by $s$:  given a finite word
$w = a_0 a_1 \cdots a_n$, we write $w[s] = a_{s(0)} a_{s(1)} \cdots
a_{s(k)}$, where $k$ is the largest integer such that $s(k) \leq n < s(k+1)$.
(If there is no such integer, then $w[s] = \epsilon$.)
Given a language $L$, we define $L[s] = \lbrace w[s] \ : \ w \in L \rbrace$.

\begin{example}
If $w = {\tt theorem}$, and $s = 0,2,4,6,\ldots$, the
sequence of even integers, then $w[s] = {\tt term}$.  
If $t = 1,3,5,\ldots$, the sequence of odd integers, then
$w[t] = {\tt hoe}$.
\end{example}

Berstel et al. \cite{Berstel&Boasson&Carton&Petazzoni&Pin:2006}
proved a number of theorems about filters, and characterized
those sequences $s$ that preserve regularity (i.e., $L[s]$ is always regular
if $L$ is) and context-freeness.

In this note we revisit the concept of filtering from a slightly different
point of view.    Suppose we have an infinite set of filters 
$S = \lbrace s_1, s_2, \ldots \rbrace$.  Given a language $L$, what
can be said about the set of all filtered languages
$\lbrace L[s_i] \ : \ i \geq 1 \rbrace$?   For example, is it finite?

In this note we are only concerned with filters $s$ that represent
{\it arithmetic progressions}:  there exist integers $a \geq 1, b \geq 0$
such that $s_i = ai + b$ for $i \geq 0$.  We consider four different
types of filter sets:
\begin{itemize}
\item[(a)] $a \geq 1$ and $b = 0$:  the weak arithmetic progressions
\item[(b)] $a \geq 1$ and $0 \leq b < a$:  the ordinary arithmetic progressions
\item[(c)] $a \geq 1$ and $b \geq 0$:  the strong arithmetic progressions
\item[(d)] $a = 1$ and $b \geq 0$:  the shifts
\end{itemize}

If $L$ is regular, a simple argument (given below) shows that filtration by
the strong arithmetic progressions produces only finitely many distinct
languages (and hence the same is true for filtration by the weak and
ordinary arithmetic progressions and shifts).
By contrast, there exist context-free languages $L$ so that
filtering only by the weak arithmetic progressions or the shifts produces
infinitely many distinct languages (and hence the same is true
for the ordinary and strong arithmetic progressions).

In Section~\ref{diag-sec}
we introduce a natural operation on formal languages that is related
to the results of 
Berstel et al.~\cite{Berstel&Boasson&Carton&Petazzoni&Pin:2006},
but seemingly cannot be analyzed using their framework.
We show that this operation
preserves regularity, but does not preserve context-freeness.

We adopt the following notation:  if $L$ is a language,
and $s = (s_i)_{i \geq 0}$ is an
arithmetic progression such that $s_i = ai + b$, then we 
define $L_{a,b} := L[s]$.  Similarly, if $w$ is a word, we define
$w_{a,b} := w[s]$.  

\section{The regular case}

\begin{theorem}
If $L$ is regular, then filtering by the strong arithmetic progressions
produces finitely many distinct languages.
\label{reg1}
\end{theorem}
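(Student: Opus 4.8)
The plan is to fix a DFA $M = (Q, \Sigma, \delta, q_0, F)$ recognizing $L$, with $|Q| = n$, and to build, for each pair $(a,b)$, an NFA recognizing $L_{a,b}$ whose description depends on $(a,b)$ only through data lying in finite sets. First I would unwind the definition of the filter $s_i = ai + b$. A word $v = v_0 v_1 \cdots v_k$ lies in $L_{a,b}$ exactly when there is a word $w \in L$ of the shape $w = u\, v_0\, y_1\, v_1 \cdots y_k\, v_k\, z$, where $u$ is an arbitrary block of length $b$ (the letters at positions $0, \ldots, b-1$), each $y_j$ is an arbitrary block of length $a-1$ (the letters strictly between consecutive sampled positions), and $z$ is an arbitrary suffix of length at most $a - 1$ (the letters past the last sampled position). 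Thus $v \in L_{a,b}$ iff the sampled letters can be interleaved with filler blocks of the prescribed lengths so as to land in $L$.

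This description suggests reading $v$ on $M$ while guessing the filler blocks. I would introduce three pieces of derived data, writing $\delta$ also for its extension to words. Let $I_b = \{ \delta(q_0, u) : u \in \Sigma^b \} \subseteq Q$ be the set of states reachable from $q_0$ by an arbitrary length-$b$ prefix. Let $G_a \subseteq Q \times Q$ be the gap relation $G_a = \{ (p, q) : q = \delta(p, y) \text{ for some } y \in \Sigma^{a-1}\}$, which records where a length-$(a-1)$ filler can carry a state. And let $F'_a = \{ p \in Q : \delta(p, z) \in F \text{ for some } z \text{ with } |z| \le a-1\}$ be the states from which the remaining short suffix can still reach an accepting state. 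With these in hand, the NFA for $L_{a,b}$ reads $v$ symbol by symbol: it starts in the set $I_b$, reads $v_0$ via the ordinary transition $\delta$, and for each subsequent symbol $v_j$ it first moves along $G_a$ (guessing the filler $y_j$) and then applies $\delta$ on $v_j$; it accepts if the final state lies in $F'_a$. (The empty word is handled separately: $\epsilon \in L_{a,b}$ iff $L$ contains a word of length at most $b$, a single bit of information.)

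Once correctness of this NFA is established, finiteness is immediate: its state set $Q$ and its letter transitions $\delta$ do not depend on $(a,b)$ at all, and the only parts that do --- namely $I_b$, $G_a$, $F'_a$, and the $\epsilon$-bit --- are respectively an element of $2^Q$, of $2^{Q \times Q}$, of $2^Q$, and of $\{0,1\}$. Each of these is a finite set, so as $(a,b)$ ranges over all strong arithmetic progressions only finitely many distinct NFAs arise, and hence only finitely many distinct languages $L_{a,b}$.

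I expect the main obstacle to be the bookkeeping in verifying that the NFA exactly recognizes $L_{a,b}$, rather than a slightly larger or smaller language. The delicate points are the asymmetry between the initial block (length $b$) and the interior gaps (length $a-1$), the treatment of the short trailing suffix through $F'_a$, and the degenerate cases $a = 1$ (where $G_a$ is the identity relation and there are no fillers) and very short $w$ (where $v = \epsilon$). Getting these boundary conditions exactly right is the crux; the counting argument afterwards is routine.
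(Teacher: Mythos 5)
Your proposal is correct and is essentially the paper's own proof: your sets $I_b$, $G_a$, and $F'_a$ are exactly the paper's boolean-matrix data $[1\ 0\cdots 0]M^b$, $M^{a-1}$, and $T$, your $\epsilon$-bit matches the paper's conditional inclusion of the initial state among the final states, and the paper's DFA (with states in $\{0,1\}^n$ plus a fresh start state, the latter also resolving the first-letter asymmetry you flag) is just the determinization of your NFA. The finiteness argument is likewise the same in both: the automaton's description depends on $(a,b)$ only through parameters ranging over fixed finite sets.
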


\begin{remark}
It is easy to see that if $L$ is regular and
$s$ is an arithmetic progression, then $L[s]$ is regular.
Indeed, this follows immediately from the theorem that the regular
languages are closed under applying a transducer, since it is easy
to make a transducer that extracts the letters corresponding to
indices in $s$.  That is not the issue here; we need to see that
among all the regular languages produced by filtering by a
strong arithmetic progression, there are only finitely many distinct
languages.
\end{remark}

\begin{proof}
Let $A = (Q, \Sigma, \delta, q_0, F)$ be a DFA accepting $L$.  Our
proof is based on the boolean matrix interpretation of automata
\cite{Zhang:1999}.
Let $M_c$ be the boolean incidence matrix of the underlying transition graph of
the automaton corresponding to a transition on the symbol $c \in \Sigma$.
That is, if $Q = \lbrace q_0, q_1, \ldots, q_{n-1} \rbrace$, then 
\begin{displaymath}
(M_c)_{i,j} = \begin{cases}
	1, & \text{if $\delta(q_i, c) = q_j$}; \\
	0, & \text{otherwise}.
	\end{cases}
\end{displaymath}
We also write $M = \bigvee_{c \in \Sigma} M_c$.  
By standard results about path algebra,
the matrix $M^n$ has a $1$ in row $i$ and column $j$ if and only if
there is a length-$n$ path from $q_i$ to $q_j$.

Suppose $L = L(A)$.  We show how to create a DFA
$A= (Q', \Sigma, \delta', q'_0, F')$ accepting
$L_{a,b}$.     The idea is that $w = c_0 c_1 \cdots c_{n-1}$ 
should be accepted if and only if there exists a word
$x \in L$ such that
$$x = x_0 c_0 x_1 c_1 \cdots x_{n-1} c_{n-1} x_n,$$
where $x_0, x_1, \ldots, x_n$ are words such that
$|x_0| = b$, $|x_i| = a-1$ for $1 \leq i < n$, and
$|x_n| < a$.

The state set is
$Q' = \lbrace q'_0 \rbrace \ \cup \ \lbrace 0, 1 \rbrace^n$.
Thus all states except $q'_0$ are boolean vectors.
We let $f$ be a boolean vector with $1$'s in the positions
corresponding to final states of $F$.

We define the transition function $\delta'$ as follows:
\begin{eqnarray*}
\delta'(q'_0, c) &=& [1 \ \overbrace{0 \ 0 \cdots \ 0}^{n-1} \, ]\, M^b M_c; \\
\delta'(q, c) &=& q M^{a-1} M_c,
\end{eqnarray*}
for all boolean vectors $q$ and symbols $c \in \Sigma$.
Also define
$$T = \lbrace q \ : \ \text{ there exists  $i$, $0 \leq i < a$,
such that $q \cdot M^i \cdot f = 1$ } \rbrace.$$
Finally, set
\begin{displaymath}
F' = 
\begin{cases}
T \ \cup \ \lbrace q'_0 \rbrace, & \text{ if $L$ contains a word of
length $\leq b$;} \\
T, & \text{otherwise.}
\end{cases}
\end{displaymath}

An easy induction on $n$ now shows that if
$\delta'(q'_0, c_0 c_1 \cdots c_{n-1}) = v$, then
$v$ has $1$'s in the positions corresponding to all states of the
form $\delta(q_0, x_0 c_0 \cdots x_{n-1} c_{n-1})$,
where the words $x_i$ satisfy the inequalities mentioned previously.
It follows that $L(A') = L_{a,b}$.

Note that
$A'$ has $2^n + 1$ states, and this quantity
does not depend on $a$ or $b$.
There are only finitely many
languages with this property.
\end{proof}

\section{The context-free case}

\begin{theorem}  There exists a context-free language
$L$ such that filtering by the weak arithmetic progressions
produces infinitely many distinct languages.
\end{theorem}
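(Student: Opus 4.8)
The plan is to produce a single context-free language $L$ together with an infinite family of words $(w_a)_{a \in A}$, indexed by an infinite set $A$ of progression differences, such that $w_a \in L_{a,0}$ while $w_a \notin L_{a',0}$ for every $a' \in A$ with $a' < a$. Any such family forces the $L_{a,0}$ to be pairwise distinct for $a \in A$, which is exactly the desired conclusion. So the whole problem reduces to engineering $L$ so that filtering by $a$ leaves a detectable fingerprint of $a$, and then isolating a witness word carrying that fingerprint.

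The engine I would use is \emph{marker survival}. Put a distinguished letter into an otherwise context-free skeleton at a position controlled by the matched counter of $L$; under $w \mapsto w_{a,0}$ the marker survives precisely when its position is divisible by $a$, and its surviving location in the output is that position divided by $a$. Thus filtering converts divisibility and residue information about marker positions into positional information in the output, and the context-free matching lets me drive those positions as high as I like. The subtlety I must respect is that a single matched constraint is \emph{scale invariant} under filtering: if the only link enforced by $L$ is an equality (or a fixed linear relation) between two block lengths, then rescaling the witness by $a$ shows that $L_{a,0}$ is the same language for all $a \ge 2$, so no fingerprint survives. The fingerprint therefore has to come from absolute position modulo $a$, not from any scale-free relation. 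Here I would lean on the observation underlying Section~\ref{diag-sec}: filtering by $a$ a word whose length is the perfect square $(a-1)^2$ returns precisely its diagonal $\diag$, a word of length $a-1$, and more generally the surviving markers of any word record the residues of their positions modulo $a$. By tying the admissible marker positions of $L$ to a quadratically growing quantity, one can arrange that the set of short outputs obtainable from the square-length words of $L$ genuinely depends on $a$, and does so without bound as $a \to \infty$; the witness $w_a$ is then read off from such a square-length word.

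The main obstacle is the collapse, that is, ruling out that $w_a$ is accidentally produced by filtering some other word of $L$ by a different $a'$. This has two faces. First, a short output word can be the image of every input whose length lies in a whole interval, so I must choose $w_a$ whose precise shape forces its preimage to have a specific square length and a specific residue pattern that only $a' = a$ can realize; this is exactly where the rigidity of the quadratic tie, rather than a merely linear one, is essential. Second, I must control the ``noise'' outputs, those $w_{a,0}$ in which the marker fails to survive: if these filled up the ambient space they could supply $w_a$ for a wrong $a'$. I would handle this by choosing the alphabet and the skeleton so that every noise output lands in a fixed regular set disjoint from the shape of the witnesses, so that the witnesses can arise only through genuine marker survival. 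Verifying this separation, together with checking that $L$ is context-free, is the part I expect to be delicate; the divisibility bookkeeping and the construction of the $w_a$ themselves should then be routine.
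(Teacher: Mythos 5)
You have correctly identified the right strategy, and it is in essence the paper's strategy: exhibit witnesses $w_a \in L_{a,0}$ that cannot lie in $L_{a',0}$ for $a' \neq a$, produce them by a marker letter whose survival under filtering pins down $a$, and control the ``noise'' by intersecting with a regular set of a shape that only genuine marker survival can produce. But your proposal stops at the level of strategy: it never exhibits $L$, never exhibits the witnesses, and explicitly defers the verification of context-freeness and of the separation property --- and those deferred steps are the entire mathematical content of the theorem. As it stands, this is a plan for a proof, not a proof.

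Moreover, the one concrete mechanism you commit to --- ``tying the admissible marker positions of $L$ to a quadratically growing quantity'' --- cannot be implemented literally by a context-free language: a CFL cannot enforce a quadratic relation between a marker's position and another parameter (for instance $\lbrace 0^{n^2} 1 \ : \ n \geq 0 \rbrace$ is not context-free, since its Parikh image is not semilinear). The missing idea, which resolves this tension, is that the quadratic growth must not be a constraint of $L$ at all; it should appear only in the extremal witness. The paper takes $L = \lbrace 1 0^n 2 \, (0^+ 3)^n \ : \ n \geq 1 \rbrace$, whose only matched constraint is the perfectly context-free one that the \emph{number} of $0^+3$ blocks equals the prefix length $n$, with the inner block lengths left free. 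Filtering by $a$ picks positions $0, a, 2a, \ldots$, so a filtered word of the form $123^k$ forces the unique $2$ (at position $n+1$) to sit at position $a$, i.e., $n = a-1$; hence $w$ contains only $a-1$ threes and $k \leq a-1$. Conversely $123^{a-1} = z_{a,0}$ for $z = 1 0^{a-1} 2 (0^{a-1} 3)^{a-1} \in L$, a word of length $a^2+1$ --- this is the only place the quadratic scale enters. So the longest word of the form $123^*$ in $L_{a,0}$ is exactly $123^{a-1}$, which makes the languages $L_{a,0}$, $a \geq 2$, pairwise distinct. Your outline gestures at each of these ingredients, but without an explicit $L$ and this verification, the statement has not been proved.
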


\begin{proof}
Consider the
language 
$$L = \lbrace 1 0^n 2 (0^+ 3)^n \ : \ n \geq 1 \rbrace.$$
Then it is easy to see that $L$ is context-free, as it is generated by
the context-free grammar
\begin{eqnarray*}
S & \rightarrow & 10AB \\
A & \rightarrow & 0AB \ | \ 2 \\
B & \rightarrow & 0B \ | \ 03
\end{eqnarray*}
We claim that the languages $L_{a,0}$ for $a \geq 2$ are all distinct.
To see this, it suffices to show that
$L_{a,0} \ \cap \ 123^+  = \lbrace 123^{a-1} \rbrace $.

Clearly $123^{a-1} = z_{a,0}$, where $z = 1 0^{a-1} 2 (0^{a-1} 3)^{a-1}
\in L$.

Now suppose $x \in L_{a,0} \ \cap \ 123^+$.
Then $x = w_{a,0}$ for some $w \in L$.
Since each word in $L$ starts $1 0^n 2$ and contains no other 
$2$'s, we must have $n = a-1$.  It follows that $w \in
1 0^{a-1} 2 (0^+ 3)^{a-1}$.  But then $w$ contains only $a-1$ $3$'s, so
to get $a-1$ $3$'s in $x$, each of them must be used.
It follows that the exponent of $0$ in each $0^+ 3$ is
$a-1$, and so $x = 123^{a-1}$.

This completes the proof.
\end{proof}

\begin{theorem}
There exists a context-free language such that $L$ filtered by 
the shifts results in infinitely many distinct languages.
\end{theorem}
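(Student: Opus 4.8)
The plan is to exploit the fact that a shift filter with $a = 1$ and shift $b \geq 0$ is nothing more than deletion of the first $b$ letters: if $w = a_0 a_1 \cdots a_n$ and $s_i = i + b$, then $w_{1,b} = a_b a_{b+1} \cdots a_n$ when $n \geq b$, and $w_{1,b} = \epsilon$ when the word has length at most $b$. Thus $L_{1,b}$ is exactly the set of ``$b$-suffixes'' of words of $L$ (together with $\epsilon$ if $L$ contains a short enough word). To show that infinitely many of these languages are distinct, I would follow the same strategy as in the previous theorem: exhibit a fixed regular language $R$ such that $L_{1,b} \cap R$ is a singleton that depends on $b$, which immediately forces the $L_{1,b}$ to be pairwise distinct.

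The candidate I would use is the linear context-free language
$$L = \lbrace 0^n 1 0^n \ : \ n \geq 0 \rbrace,$$
generated by the grammar $S \rightarrow 0 S 0 \mid 1$, together with the regular language $R = 1 0^*$. The claim to establish is that $L_{1,b} \cap R = \lbrace 1 0^b \rbrace$ for every $b \geq 0$. Since distinct values of $b$ give distinct singletons, this shows at once that $\lbrace L_{1,b} : b \geq 0 \rbrace$ contains infinitely many distinct languages.

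The verification rests on the observation that each word $0^n 1 0^n$ contains its unique $1$ at position $n$ (indexing from $0$), so it has length $2n+1$. Deleting the first $b$ letters produces a word that begins with $1$ if and only if the surviving suffix starts exactly at the position of that unique $1$, i.e.\ if and only if $b = n$; in that case the result is $a_n a_{n+1} \cdots a_{2n} = 1 0^n = 1 0^b$, which indeed lies in $R$. Conversely $1 0^b$ arises only from the word with $n = b$, so no other word of $L$ can contribute an element of $R$. I do not expect a genuine obstacle here; the only place warranting care is the boundary bookkeeping in the index definition of $w[s]$ (confirming that the deletion begins precisely at position $b$ and that short words contribute only $\epsilon$, hence nothing to $R$), and the confirmation that the single marker $1$ pins down the shift length $b$ uniquely so that no stray word slips into the intersection.
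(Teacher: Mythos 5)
Your proof is correct and follows essentially the same route as the paper: the paper takes $L = \lbrace 0^n 1^n \ : \ n \geq 0 \rbrace$ and distinguishes the languages $L_{1,b}$ by observing that $1^b$ is the longest word of the form $1^*$ in $L_{1,b}$, while you take $L = \lbrace 0^n 1 0^n \ : \ n \geq 0 \rbrace$ and read off $b$ from the singleton $L_{1,b} \cap 10^* = \lbrace 10^b \rbrace$. Both arguments rest on the same observation that a shift is deletion of a length-$b$ prefix, and differ only in the choice of witness language and of the fingerprint that recovers $b$.
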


\begin{proof}
Let $L = \lbrace 0^n 1^n \ : \ n \geq 0 \rbrace$.  Then
each of the languages $L_{1,b}$ is distinct, as for each $b \geq 0$, the
word $1^b$ is the longest word of the form $1^*$ in $L_{1,b}$.
\end{proof}

\section{The operation $\diag$}
\label{diag-sec}

Inspired by \cite{Lepisto&Pappalardi&Saari:2007}, which considered the
transposition of words arranged into square arrays,
we introduce the
following natural operation on words of length $n^2$ for some
integer $n \geq 1$:  we arrange the letters of 
the word $w = a_0 a_1 \cdots a_{n^2-1}$ in row major order in
a square array,
\begin{displaymath}
\left[
\begin{array}{cccc}
a_0 & a_1 & \cdots & a_{n-1} \\
a_n & a_{n+1} & \cdots & a_{2n-1} \\
\vdots & \vdots & \ddots & \vdots \\
a_{n^2-n} & a_{n^2-n+1} & \cdots & a_{n^2-1}
\end{array}
\right]
\end{displaymath}
and then take the diagonal $a_0 a_{n+1} a_{2n+2} \cdots a_{n^2-1}$.
We call the result $\diag(w)$.  Thus, for example,
$\diag({\tt absorbent}) = {\tt art} $.  Diagonals of matrices
have long been studied in mathematics.
We extend $\diag$ to languages $L$ as follows:
$$ \diag(L) = \lbrace \diag(w) \ : \ w \in L \text{ and there exists
$n \geq 1$ such that } |w| = n^2 \rbrace.$$

\begin{theorem}
If $L$ is regular then so is $\diag(L)$.
\end{theorem}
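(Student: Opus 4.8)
The plan is to reduce $\diag$ to filtration by a weak arithmetic progression and then deal with the one feature that places it outside the scope of Theorem~\ref{reg1}. The key geometric observation is that for a word $w$ of length $n^2$, the diagonal positions $0, n+1, 2(n+1), \ldots, (n-1)(n+1) = n^2-1$ are exactly the weak arithmetic progression with common difference $a = n+1$ and offset $b = 0$; that is, $\diag(w) = w_{n+1,0}$. Moreover, since $\diag$ applied to a word of length $m^2$ yields a word of length $m$, the output length determines $n$: we have $\diag(L) \cap \Sigma^n = \{ w_{n+1,0} : w \in L,\ |w| = n^2 \}$, and these pieces partition $\diag(L)$ by length. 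The catch, and the reason the earlier theorem does not apply verbatim, is that the common difference $n+1$ is not a fixed constant: it equals the length of the very word being filtered.

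To handle the varying stride I would use the boolean matrix machinery from the proof of Theorem~\ref{reg1}. Let $A = (Q, \Sigma, \delta, q_0, F)$ be a DFA for $L$, let $M_c$ be the boolean transition matrix for $c \in \Sigma$, let $M = \bigvee_c M_c$, let $e_0$ be the row indicator of $q_0$, and let $f$ be the column indicator of $F$. Reading the diagonal letters of a length-$n^2$ word while skipping the $n$ intervening positions between consecutive diagonal entries shows that there is a word $w \in L$ with $|w| = n^2$ and $\diag(w) = c_0 c_1 \cdots c_{n-1}$ if and only if
$$e_0 \, M_{c_0} \, (M^n M_{c_1})(M^n M_{c_2}) \cdots (M^n M_{c_{n-1}}) \, f \neq 0 .$$
There is no trailing factor, since the last diagonal entry occupies position $n^2-1$, the final letter of $w$. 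The essential point is that the powers $M^0, M^1, M^2, \ldots$ lie in the finite monoid of boolean matrices and hence form an ultimately periodic sequence: there are $t \geq 0$ and $p \geq 1$ with $M^{m+p} = M^m$ for all $m \geq t$. Consequently, for $n \geq t$ the skip matrix $M^n$ depends only on the residue $\rho = n \bmod p$, and I write $M^n = P_\rho$, one of finitely many fixed matrices.

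With this in hand I would conclude as follows. For each residue $\rho \in \{0, \ldots, p-1\}$ set $N_c = P_\rho M_c$ and build, exactly as in Theorem~\ref{reg1} but with $P_\rho$ playing the role of $M^{a-1}$, a DFA $B_\rho$ on state set $\{0,1\}^{|Q|} \cup \{q_0'\}$ that accepts $c_0 \cdots c_{m-1}$ iff $e_0 M_{c_0} N_{c_1} \cdots N_{c_{m-1}} f \neq 0$. By the displayed criterion together with $M^n = P_\rho$, every $u$ with $|u| \geq t$ and $|u| \equiv \rho \pmod p$ satisfies $u \in \diag(L)$ iff $u \in L(B_\rho)$. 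Intersecting with the regular length-condition language $C_\rho = \{ u : |u| \geq t,\ |u| \equiv \rho \pmod p \}$ thus captures exactly the long words of $\diag(L)$ in that residue class. The words $u \in \diag(L)$ with $|u| < t$ form a finite, hence regular, set $F_0$, since each length class $\diag(L) \cap \Sigma^m$ is a subset of the finite set $\Sigma^m$. Therefore
$$\diag(L) = F_0 \cup \bigcup_{\rho=0}^{p-1} \big( L(B_\rho) \cap C_\rho \big)$$
is a finite union of regular languages, and so is regular.

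The step I expect to be the main obstacle is precisely the dependence of the filtering stride on the output length: Theorem~\ref{reg1} controls only filtration by a fixed progression, whereas here $a = n+1$ grows with the word. Pinning this down via the eventual periodicity of $\{M^n\}$ — which lets one replace the length-dependent $M^n$ by one of finitely many fixed matrices $P_\rho$, selected by a regular congruence condition on the length — is the heart of the argument; the remaining bookkeeping (the finite exceptional set $F_0$ and the per-residue subset-construction automata $B_\rho$) is routine.
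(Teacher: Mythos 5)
Your proof is correct, but it takes a genuinely different route from the paper's. Both arguments run on the same boolean-matrix engine, and both must confront exactly the difficulty you identify: the skip matrix $M^{|u|}$ depends on the length of the word being read. The mechanisms for handling it differ. The paper builds a single NFA whose states are triples $[v,V,W]$: on the first letter it nondeterministically guesses the skip matrix $W \in \lbrace M^n : n \geq 0 \rbrace$, uses $W$ for every subsequent skip, and meanwhile tracks $V = M^{m}$ where $m$ is the number of letters read so far; acceptance requires $vf = 1$ \emph{and} $V = W$, which certifies a posteriori that the guessed matrix equals $M^{|u|}$. (This guess-and-verify works precisely because, as in your displayed criterion, membership depends on $|u|$ only through the matrix $M^{|u|}$, not through the integer $|u|$ itself.) You instead invoke the ultimate periodicity of $(M^n)_{n \geq 0}$ in the finite monoid of boolean matrices, replace $M^{|u|}$ by one of finitely many fixed matrices $P_\rho$ selected by the congruence class of $|u|$, and assemble $\diag(L)$ as a finite union of fixed-stride automata $B_\rho$ intersected with regular length-congruence languages, plus a finite exceptional set. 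What the paper's construction buys is a single uniform automaton with no periodicity parameters $t,p$ and no residue case analysis; what your decomposition buys is determinism and modularity --- it reduces $\diag$ to finitely many instances of the fixed-stride machinery of Theorem~\ref{reg1} together with closure under union and intersection, and it makes explicit the structural fact (finiteness, hence eventual periodicity, of $\lbrace M^n \rbrace$) that the paper exploits only implicitly through the finiteness of its guess set. One pedantic repair to yours: since $\diag(L)$ contains no empty word by definition, but $\epsilon$ can slip into $L(B_0) \cap C_0$ when $t = 0$ (i.e., when $M$ is a permutation matrix) and $\epsilon \in L$, you should require $|u| \geq \max(t,1)$ in the definition of $C_\rho$; this is a one-line fix and does not affect the argument.
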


\begin{proof}
Given a DFA $A = (Q, \Sigma, \delta, q_0, F)$ accepting $L$,
we construct an NFA $A' = (Q', \Sigma, \delta', q'_0, F')$ accepting 
$\diag(L)$.   As in the proof of Theorem~\ref{reg1},
we let $M_c$ be the $n \times n$ boolean incidence matrix of
the underlying transition graph of
the automaton corresponding to a transition on the symbol $c \in \Sigma$,
and we define $M = \bigvee_{c \in \Sigma} M_c$.

The idea is that $w = a_1 \cdots a_t \in L(A')$ if and only if there exists
$x \in L(A)$ such that $x = a_1 x_1 \cdots a_{t-1} x_{t-1} a_t$
where $|x_i| = t$ for $1 \leq i < t$.    

The states of $A'$ are of the form $[v, V, W]$ where $v$ is 
a length-$n$ boolean vector and $V$ and $W$ are $n \times n$ boolean
arrays.  Let $i = [1 \ \overbrace{0 \ 0 \cdots \ 0}^{n-1} \, ]$
and $f$ be the boolean vector corresponding to the final states of $A$.
The transitions of $A'$ are given by
\begin{eqnarray*}
\delta'(q'_0, c) &=& \lbrace [i \cdot M_c, M, X] \ : \ 
	\exists n\geq 0 \text{ such that } X = M^n \rbrace \\
\delta'([v, V, W], a) &=& \lbrace [v M_a W, VM, W] \rbrace .
\end{eqnarray*}
for all $c \in \Sigma$, and boolean vectors $v$, and boolean
matrices $V, W$.
The final states of $A'$ are 
$$F' = \lbrace [v, V, W] \ : \ vf = 1 \text{ and } V = W \rbrace.$$
We leave it to the reader to verify that $L(A') = \diag(L)$.
\end{proof}

\begin{theorem} 
There exists a context-free language $L$ such that
$\diag(L)$ is not context-free.
\end{theorem}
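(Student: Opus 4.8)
The plan is to exhibit a context-free language $L$ and then prove that $\diag(L)$ is not context-free by intersecting it with a suitable regular language $R$ and recognizing the result as a standard non-context-free language. This reduction is legitimate because the context-free languages are closed under intersection with regular sets: if $\diag(L)$ were context-free then so would be $\diag(L)\cap R$, and I will arrange for the latter to violate the pumping lemma (or Ogden's lemma). The regular language $R$ serves only to discard the ``noise'' diagonals and expose a single clean family of words whose structure is genuinely non-context-free.

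The difficulty — and the source of the counterexample — lies in how $\diag$ interacts with length. A word of length $n^2$ has a diagonal of length $n$: reading positions $0, n+1, 2(n+1), \dots, (n-1)(n+1)$, the operation compresses the word by the factor $n+1 = \sqrt{\phantom{x}|w|}+1$. I would choose $L$ so that, on its square-length words, this $\sqrt{\phantom{x}}$-compression turns a context-free (hence semilinear) constraint in $w$ into a non-semilinear constraint on $\diag(w)$. The mechanism I would use is to force certain marker symbols onto the diagonal at controlled indices. A marker placed at word-position $p$ survives into the diagonal exactly when $(n+1)\mid p$, and then it appears at diagonal-index $p/(n+1)$; in particular a marker at the final position $n^2-1=(n-1)(n+1)$ always lands at diagonal-index $n-1$, so the diagonal can ``read off'' its own length $n=\sqrt{\phantom{x}|w|}$. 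Comparing this against another marker whose word-position is tied, by a context-free matching in $L$, to a linear quantity, the diagonal ends up relating a linear count to $\sqrt{\phantom{x}|w|}$ — a relation no context-free language over a bounded alphabet can realize. Concretely I would design $L$ (a marker-and-runs language in the spirit of the examples above) so that $\diag(L)\cap R$ is a squaring language such as $\{0^k 1 0^{k^2}: k\ge 1\}$, or a triple-counting language such as $\{a^k b^k c^k : k\ge 1\}$, and then invoke the pumping lemma.

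Verifying that $\diag(L)\cap R$ is exactly the intended family is the routine-but-delicate part, carried out by the same index arithmetic and boolean-matrix bookkeeping used in the proof that $\diag$ preserves regularity; one tracks, for each $n$, which marker positions are multiples of $n+1$ and what the intervening runs contribute. The main obstacle, however, is the construction of $L$ itself. Because $\diag$ divides every run length by the common factor $n+1$, any construction based only on runs merely rescales the run-length relations and therefore keeps them semilinear, so the naive attempts stay context-free; the art is to make the compression factor $n+1$ and the diagonal length $n$ interact with the word's matching structure so that the surviving relation is genuinely non-semilinear. Once $L$ and $R$ are pinned down so that $\diag(L)\cap R$ is one of the standard non-context-free languages, the pumping step is immediate and completes the proof.
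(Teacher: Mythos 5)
Your overall strategy is exactly the paper's: choose a context-free $L$, intersect $\diag(L)$ with a regular set $R$ to expose a clean family, and kill that family with the pumping lemma; your guiding mechanism (markers that survive onto the diagonal only at positions divisible by $n+1$, so that the diagonal can ``read off'' $n = \sqrt{|w|}$ and force independent counts to agree with it) is also precisely the mechanism the paper exploits. But the proposal stops at the point where the proof actually begins: you never exhibit $L$, and you say yourself that constructing it is ``the main obstacle.'' A proof of an existential statement must produce the witness; as written, your argument establishes only that \emph{if} some $L$ and $R$ can be arranged so that $\diag(L)\cap R = \{a^k b^k c^k\}$ or $\{0^k 1 0^{k^2}\}$, then the theorem follows --- which is the easy implication, not the theorem. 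For comparison, the paper's witness is the concatenation of three languages of the form $\{a 0^{3m+1} b (0^+ c)^{m-2} 0^+ : m \geq 3\}$ (with fresh markers $d,e,f$ and $g,h,i,j$ in the second and third factors); the point of the exponent $3m+1$ is that when the first two markers $a,b$ land on consecutive diagonal positions, the gap of $3m+1$ zeros forces the side length to be exactly $3m+1$, and applying the same argument inside all three blocks forces $m = n = p$. A counting argument (the diagonal needs $3m-6$ letters from $\{c,f,i\}$, and $y$ contains only $m-2$ of each) then pins $\diag(L) \cap abc^+def^+ghi^+j$ down to $\{a b c^t d e f^t g h i^t j : t \geq 1\}$. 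Both halves of that verification --- the choice of exponents making the marker spacing equal the side length, and the counting that rules out stray diagonals --- are substantive, not routine, and they are exactly what is absent from your proposal.

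A secondary slip: you motivate non-context-freeness by saying the compression makes the constraint \emph{non-semilinear}, but the target the paper actually reaches, $\{a b c^t d e f^t g h i^t j\}$, has a perfectly semilinear Parikh image; its non-context-freeness comes from the cross-serial dependency among the three equal counts (your $a^k b^k c^k$ option), not from non-semilinearity. Your squaring language $\{0^k 1 0^{k^2}\}$ would indeed be non-semilinear, but engineering $\diag$ to produce it is a different (and harder) construction than the one you gesture at. This conflation does not doom the plan, but it is a sign that the mechanism is not yet pinned down to the point where a concrete $L$ could be written out and verified.
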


\begin{proof}
For expository reasons,
our example is over the alphabet
$\lbrace a,b,c,d,e,f,g,h,i,j,0 \rbrace$ of $11$ letters, although it is easy
to reduce this.

Consider
$$ L = \lbrace a 0^{3m+1} b (0^+ c)^{m-2} 0^+ d 0^{3n+1} e (0^+ f)^{n-2} 0^+ g
 0^{3p+1} h (0^+ i)^{p-2} 0^+ j \ : \ m, n, p \geq 3 \rbrace. $$
It is clear that $L$ is context-free, as it is the concatenation 
$L_1 L_2 L_3$ of
the three languages
\begin{eqnarray*}
L_1 &=& \lbrace a 0^{3m+1} b (0^+ c)^{m-2} 0^+ \ : \ m \geq 3 \rbrace \\
L_2 &=& \lbrace d 0^{3n+1} e (0^+ f)^{n-2} 0^+ \ : \ n \geq 3 \rbrace \\
L_3 &=& \lbrace g 0^{3p+1} h (0^+ i)^{p-2} 0^+ j \ : \ p \geq 3 \rbrace 
\end{eqnarray*}
each of which is easily seen to be context-free.

We will show that $\diag(L)$ is not context-free by showing that
$$L' := \diag(L) \ \cap \ a b c^+ d e f^+ g h i^+ j $$
is not context-free.

We claim that $L' = \lbrace a b c^t d e f^t g h i^t j \ : \ t \geq 1 \rbrace$.
It is easy to see that every word of the form
$a b c^t d e f^t g h i^t j$ for $t \geq 1$ is in $L'$, since we can 
take $m = n = p = t+2$,
and the exponent of $0$ in each $0^+$ term to be $3m+1$.  

It remains to see that these are the only words of the form
$a b c^+ d e f^+ g h i^+ j$ in $L'$.
Let $x \in L'$, and let $y \in L$ such that $x = \diag(y)$.  Then
since the first two symbols of $x$ must be $ab$, and since they
are separated by $3m+1$ $0$'s for some $m \geq 3$, it must be that
$|y| = (3m+1)^2$.  Then $|x| = 3m+1$.
We can repeat the argument with the letters
$d, e$ and $g,h$ to get $m = n = p$.  
Removing the single occurrence of each letter $a, b, d, e, g, h, j$
from $x$ leaves $3m-6$ letters, which must be chosen from
$\lbrace c, f, i \rbrace$.  But there are only $m-2$ possible 
occurrences of each of the letters $c, f, i$ in $y$, so each occurrence of
these letters must appear on the diagonal of $y$ to get $x$.
Then these letters must be separated by $3m+1$ $0$'s.  
Thus $x = a b c^{m-2} d e f^{m-2} g h i^{m-2} j$.

Now an easy argument from the pumping lemma
shows that $L'$ is not context-free.  Hence $\diag(L)$ is not
context-free.
\end{proof}

\end{document}